\newtheorem{lemma}{Lemma}
\newtheorem*{lemma*}{Lemma}
\newtheorem{claim}{Claim}
\newtheorem*{claim*}{Claim}
\newtheorem{theorem}{Theorem}
\newtheorem*{theorem*}{Theorem}
\newtheorem{fact}{Fact}
\newtheorem*{fact*}{Fact}
\newtheorem{definition}{Definition}
\newtheorem*{definition*}{Definition}
\newtheorem*{corollary*}{Corollary}
\newcommand{\AutoAdjust}[3]{\mathchoice{ \left #1 #2  \right #3}{#1 #2 #3}{#1 #2 #3}{#1 #2 #3} }
\newcommand{\Xcomment}[1]{{}}
\newcommand{\InParentheses}[1]{\AutoAdjust{(}{#1}{)}}
\newcommand{\InBrackets}[1]{\AutoAdjust{[}{#1}{]}}
\newcommand{\Ex}[2][]{\operatorname{\mathbf E}_{#1}\InBrackets{#2\vphantom{E_{F}}}}
\newcommand{\Prx}[2][]{\operatorname{\mathbf{Pr}}_{#1}\InBrackets{#2}}
\newcommand{\Prlong}[2][]{\operatornamewithlimits{\mathbf Pr}\limits_{#1}\InBrackets{#2\vphantom{\operatornamewithlimits{\mathbf Pr}\limits_{#1}}}}
\def\prob{\Prx}
\def\expect{\Ex}
\newcommand{\eqdef}{\stackrel{\textrm{def}}{=}}
\newcommand{\be}{\begin{equation}}
\newcommand{\ee}{\end{equation}}
\newcommand{\bee}{\begin{equation*}}
\newcommand{\eee}{\end{equation*}}
\newcommand{\eps}{\varepsilon}
\newcommand{\vect}[1]{\ensuremath{\vec{#1}}}
\newcommand{\R}{\mathbb{R}}
\newcommand{\N}{\mathbb{N}}
\newcommand{\upround}[1]{\left\lceil #1 \right\rceil}
\newcommand{\dround}[1]{\left\lfloor #1 \right\rfloor}
\def \reals {{\mathbb R}}
\newcommand{\stopt}[1][t]{\textsf{STOP}_{#1}}
\newcommand{\distone}{F}
\newcommand{\oprice}{\price^o}
\newcommand{\mech}[0]{\mathcal{M}}
\newcommand{\rev}[0]{\textsf{Rev}}
\newcommand{\myopic}[1]{\textsf{Greedy}\left(#1\right)}
\newcommand{\myopicname}{\textsf{Greedy}}
\newcommand{\spm}{\textsf{SPM}}
\newcommand{\uspm}{\textsf{U-SPM}}
\newcommand{\opt}{\textsf{Opt}}
\newcommand{\upricing}{\textsf{Uprice}}
\newcommand{\upricingmech}[1]{\textsf{Uprice}\InParentheses{#1}}
\newcommand{\baitfamily}{\mathcal{B}}
\newcommand{\Menu}{\textsf{M}}
\newcommand{\Mexp}{\textsf{M}_{\textsf{exp}}}
\newcommand{\Mbait}{\textsf{M}_{\textsf{bait}}}
\newcommand{\Mbaite}{\textsf{M}_{\textsf{bait}}^{\textsf{e}}}
\newcommand{\Mbaito}{\textsf{M}_{\textsf{bait}}^{\textsf{o}}}
\newcommand{\Menux}{\widehat{\Menu}}
\newcommand{\lutil}[1][t]{\underline{u}_{#1}}
\newcommand{\uutil}[1][t]{\bar{u}_{#1}}
\newcommand{\val}{v}
\newcommand{\vals}{\vect{\val}}
\newcommand{\valv}{\vect{\val}}
\newcommand{\vali}[1][i]{{\val_{#1}}}
\newcommand{\util}{u}
\newcommand{\utili}[1][i]{\util_{#1}}
\newcommand{\utilx}{\widehat{\util}}
\newcommand{\utilb}{u_b}
\newcommand{\price}{p}
\newcommand{\pricev}{\vect{\price}}
\newcommand{\pricei}[1][i]{{\price_{#1}}}
\newcommand{\eT}{T_{e}}
\newcommand{\eventi}[1][i]{\mathbf{E_{#1}}}
\newcommand{\expensive}{\textsf{EXP}}
\title{Monopoly pricing with buyer search}
\author{
	Nick Gravin \\
	Shanghai University of Finance and Economics\\
	\tt{nikolai@mail.shufe.edu.cn}
	\and
	Zhihao Gavin Tang \\
	University of Hong Kong\\
	\tt{zhtang@cs.hku.hk}
}
\date{}
\begin{document}

\maketitle

\begin{abstract}
In many shopping scenarios, e.g., in online shopping, customers have a large menu of options to choose from.
However, most of the buyers do not browse all the options and make decision after considering only a small 
part of the menu. To study such buyer's behavior we consider the standard Bayesian monopoly problem for a unit-demand buyer, where 
the monopolist displays the menu dynamically page after a page to the buyer. 
The seller aims to maximize the expected revenue over the distribution of buyer's values which we assume are i.i.d. 
The buyer incurs a fixed cost for browsing through one menu page 
and would stop if that cost exceeds the increase in her utility.
We observe that the optimal posted price mechanism in our dynamic setting may have quite different structure than in the classic 
static scenario. We find a (relatively) simple and approximately optimal mechanism, that uses part of the items as a ``bait'' to 
keep the buyer interested for multiple rounds with low prices, while at the same time showing many other expensive items.
\end{abstract}

\section{Introduction}
\label{sec:intro}
The monopolist problem of selling multiple goods to a single buyer is a fundamental problem in mechanism design. 
In this situation any incentive compatible interaction between the monopolist seller and a single buyer can be described as a menu of 
possible allocations and payments that the seller offers to the buyer to choose from.
Despite extensive studies, this multidimensional mechanism design problem is not very well understood in contrast 
to the Myerson's optimal auction for the single item case~\cite{myerson1981optimal}. In special cases when the 
optimal mechanisms are known, these mechanisms often exhibit irregular and complex behavior. 
For example, the revenue of the optimal auction may be non-monotone~\cite{hart2015maximal}\footnote{Optimal auction may have higher expected 
revenue for the values with stochastically dominated prior distribution.}, or
the optimal auction must offer a menu of randomized outcomes~\cite{HartN13,daskalakis2014complexity}, i.e., lotteries.  
Another problem observed in that line of work was that the optimal auction may have a 
menu of arbitrary large (or even infinite) size even in a simple setting with a unit-demand buyer with independent values.

%

These critiques have motivated~\cite{HartN13} to propose the {\em menu size} as a measure of auction complexity 
and study its impact on the revenue in the classic monopoly problem. Since~\cite{HartN13} there has been more recent work giving various upper 
and lower bounds on the menu complexity of optimal or approximately optimal auctions in different scenarios~\cite{DughmiHN14,WangT14,Yannai17,Yannai18}. However, this work gives
bounds on the menu sizes as a restriction on the seller, which actually does not make much sense from the seller's point of view. Indeed, most of the imaginable 
sellers would not hesitate to use a complex mechanism provided that it will generate more revenue than a simple one. Moreover,  many sellers are capable
and willing to do sufficient research and do find many ways to maximize their revenue. Hence, a more accurate explanation of the prevalence of simple mechanisms
would be that simplicity is a property desired by the buyer, but not the seller. This line of thought unavoidably implies that one has 
to make certain behavioral assumptions on the buyer's interaction with the seller. In this work we propose a new simple 
theoretical model that combines and rationalizes the buyer's desire for simplicity and the seller's desire to maximize his revenue.

Let us first illustrate with a few examples the importance of simple mechanisms, i.e., short menus, from the buyer's point of view. 
We begin with a personal story that happened to one of the authors of this paper. 
This author had once a visitor who is a foodie, partial to Szechuan food. They went for dinner to a high end Chinese 
restaurant, which is famous for its variety of dishes and provides very detailed menu. In that case the menu contained a list of more than $100$ 
items each supplemented with a note describing ingredients and a cooking method. 
Everything on the menu was in Chinese, since many ingredients don't have proper translation into English. 
The visitor could not understand any Chinese and asked a waiter to help him out to choose a dish.
The waiter began reading through the menu, item by item, starting with some exotic dishes, which did not seem appealing to the visitor. 
He had to ask the waiter to stop and after some argument forced the waiter to simply suggest three best dishes of the waiter's choice.
Both of us followed the waiter's advice and ordered one out of three recommended dishes. 
Eventually, everyone greatly enjoyed the dinner, but during most of the meal the discussion was turning around the issues of complex menus and 
the necessity for the restaurants to give a short recommendation list on the front page of their menu. 

Our next example is related to the online shopping scenario that should be familiar to most of our readers. Imagine, for example, that a buyer wants to buy a computer 
mouse on one of the online shopping platforms (e.g., on Amazon). The buyer starts her search by seeing a single web page, say, with $18$ different entries out of more than
$1000$ available items on the platform. She may browse a few more pages, but she is likely to choose a mouse from among the first three pages, even though there might be over 
$50$ pages that she did not check with potentially better or cheaper products than the best one she has found.

These examples lead us to study the interaction between the buyer and the seller as
a dynamic procedure in which items on the menu are shown sequentially page-after-a-page, 
rather than as a static single-list menu. It is natural to assume that the buyer's decision to continue browsing depends on the buyer's impression of 
the previous pages. For example, the buyer may see a good item at the very first page and decide to stop and buy this item after seeing another menu page with 
less attractive offers than the first one. The impatient behavior of the buyer actually has a rational explanation. Indeed, people incur non-negligible costs for 
their search efforts. Thus it is not surprising that our buyer would not want to spend an hour fruitlessly browsing $990$ out of $1000$ items before she makes
a purchase.

In this paper we study how dynamic impatient behavior of the buyer can affect the design of pricing mechanisms. To this end we consider a simple monopoly problem, where 
the monopolist gradually displays a menu to a single unit-demand buyer. Naturally, we restrict our attention only to item-pricing mechanisms\footnote{Indeed, a lottery is 
incomparably harder to evaluate for the buyer than an item price: (i) every item in the support of the lottery's distribution requires separate consideration; 
(ii) humans are usually risk-averse and not very good at reasoning about probabilistic outcomes; (iii) there is a big issue of trust in the randomness of the lottery.}.
\paragraph{Model for Impatient Buyer.} We consider, a typical Bayesian mechanism design setting, where the seller has 
an inventory of $m$ items and tries to sell them to a unit-demand buyer. The seller has a prior knowledge of the buyer's value distribution for the goods 
which are assumed to be i.i.d. The seller's objective is to maximize the revenue. The interaction between the seller and the buyer goes in consecutive 
stages. At each stage $t\in\N$, the buyer looks at a menu page $M(t)$ of a certain size $k$, then she decides whether to continue, i.e., to see the next menu page, or to 
stop browsing the menu and take the best offer from the seller she has seen so far. 
In general, the decision of the buyer whether to stop $\stopt=1$, or to continue $\stopt=0$ at stage $t$ depends on the parts of the menu shown to the buyer and the buyer's valuations $\valv$ for the relevant items on the menu and it may be a random variable. 
Similar to the model of~\cite{AtheyEllison11} we assume that consumer incurs a certain fixed cost $\Delta$ when performing her search. This
cost is paid per menu page rather than per item unlike in~\cite{AtheyEllison11}\footnote{It is easy to adapt our model for the case where search costs are paid per item rather than per page by splitting the menu page cost between $k$ menu items.}. We further assume that the decision of the buyer $\stopt\in\{0,1\}$ depends on the buyer's utility increment compared to the previous stages. Namely, we assume that the buyer stops when the increase of her utility after seeing page $M(t)$ is not larger than her cost $\Delta$.

These modeling choices\footnote{We would like to mention that without identical assumption on the prior distribution the revenue maximization problem becomes 
quite non trivial already in the static setting, i.e., where the seller has to select up to $k$ out of $m$ different items to put on a single menu page. 
We chose not to study the case of non identical distributions more traditionally considered in the literature, because it would have been distracting for the interesting 
structural insights and simplicity of the model that could be seen more clearly in the i.i.d. case.} allow us to capture large uncertainty and exploration nature of the buyer's 
behavior in the relevant settings. In this sense the size $k$ of a menu page can be viewed as a tolerance parameter for the buyer's willingness to explore.
%
On the other hand, our model choice is motivated by the behavioral economics model of rational inattention~\cite{Sims03}. In this theory, agent is assumed to be rational, but having limited 
and rather scarce amount of attention that she can spend to access information to her benefit. Then the rationale of the buyer is that she would want to see more of the 
menu only if the improvement to her utility is worth her time, effort, or attention. Conversely, she does not want to continue if her previous experience of browsing the menu 
$M(t)$ did not improve her utility by a certain threshold amount $\Delta$. 
We believe that this very simple and reasonable rule, on the one hand, captures some real features of the consumer's behavior, and, on the other hand, offers an 
easy-to-state and interesting mechanism design problem.

\subsection{Example}
\label{sec:example}
To illustrate the problem let us consider the following simple scenario, in which the buyer's values are distributed i.i.d. according to 
$\distone:\prob[v\sim\distone]{v=10}=0.9, \prob[v\sim\distone]{v=100}=0.1$; the menu size $k=2$; search cost $\Delta=1$; and item supply is unlimited, i.e., $m=\infty$. 
Let us first consider a mechanism that posts uniform prices on the items on each menu page $\Menu(t)$, where $t\in\N$. Indeed, it seems reasonable to 
use uniform prices because of the symmetry between all items in the supply. It is not very difficult to calculate the optimal menu (see table~\ref{tab:uniform_menu}), 
which has only $2$ pages. 

\begin{table}[h]
\begin{subtable}[c]{0.4\textwidth}
\centering
\begin{tabular}{| c | c | c |}
 \multicolumn{1}{c}{$\Menu(1)$} & \multicolumn{1}{c}{~} & \multicolumn{1}{c}{$\Menu(2)$} \\
\cline{1-1}\cline{3-3}
$9$\$ &~& $98.9$\$\\
\cline{1-1}\cline{3-3}
$9$\$ &~& $98.9$\$\\
\cline{1-1}\cline{3-3}
\end{tabular}
\caption{Optimal menu sequence with uniform prices per page. Expected revenue 22.8\$.}
\label{tab:uniform_menu}
\end{subtable}
\quad
\begin{subtable}[c]{0.55\textwidth}
\centering
\begin{tabular}{| c | c | c | c | c | c | c |}
\multicolumn{1}{c}{$\Menu(1)$} & \multicolumn{1}{c}{~} & \multicolumn{1}{c}{$\Menu(2)$} & \multicolumn{1}{c}{~} & \multicolumn{1}{c}{$\Menu(3)$} 
& \multicolumn{1}{c}{\quad} & \multicolumn{1}{c}{$\Menu(10)$} \\
\cline{1-1}\cline{3-3}\cline{5-5}\cline{7-7}
$9$\$ &~& $8$\$ &~& $7$\$ & \multirow{2}{*}{$\cdots$} & $0$\$ \\
\cline{1-1}\cline{3-3}\cline{5-5}\cline{7-7}
$97.9$\$ &~& $96.9$\$ &~& $95.9$\$ & & $88.9$\$\\
\cline{1-1}\cline{3-3}\cline{5-5}\cline{7-7}
\end{tabular}
\caption{Mechanism with non-uniform prices per menu page. The expected revenue is 38.3133\$.}
\label{tab:good_menu}
\end{subtable}
\caption{Examples of mechanisms with menu pages dynamically revealed to the buyer. All prices on the tables are given for different items.}\label{tab:1}
\end{table}

The revenue of the mechanism~\ref{tab:uniform_menu} is a guaranteed $9\$$ plus the extra surplus the seller gets, if the buyer likes one of the expensive items on the $2$nd page.
In total, the revenue of mechanism~\ref{tab:uniform_menu} is $9 + \prob{\forall i\in\Menu(1):\vali=10}\cdot\prob{\exists i\in\Menu(2):\vali=100}\cdot\left(98.9-9\right)=
9+ 0.81\cdot 0.19\cdot 89.9=22.8.$
Now consider another mechanism with a sequence of non-uniform price menus with $20$ different items described in the Table~\ref{tab:good_menu}. In turns out that 
computing the expected revenue of mechanism~\ref{tab:good_menu} is not an easy numerical exercise that involves 
calculations of state distributions after up to $10$ steps in a certain Markov chain. Instead of doing that, we provide much simpler 
approximate estimate of mechanism~\ref{tab:good_menu} revenue. Consider the event that out of $20$ items shown on all $10$ pages 
there is at least one item of high value $100.$ Then, the buyer continues browsing menu pages until she gets to see her first high-value item.
With slightly less than $0.5$ probability, this item has a high price and the remaining item on the current page and both of the items on the next menu page have low values, in which case
the buyer stops and buys this expensive item. This estimate gives us an approximate value of the mechanism~\ref{tab:good_menu} revenue of $(1-0.9^{20})\cdot 0.5\cdot 100=44$.

We note that the revenue of the mechanism~\ref{tab:good_menu} (we know that mechanism~\ref{tab:good_menu} is not optimal) is significantly larger than the revenue of the best 
uniform price mechanism~\ref{tab:uniform_menu}, and that it is not hard to modify our example to make the gap between the non-uniform and uniform price mechanisms to be arbitrarily large.
On the other note, the above example already highlights the importance of approximate (versus exact) analysis in our setting. Indeed, it is unlikely that we can find a mechanism with the 
optimal revenue, when it is already difficult to compute the revenue of a given mechanism in such a small example.


\subsection{Related Work}
\label{sec:related}
Inherent complexity of optimal auctions led to the study of approximately optimal simple auctions under the name of ``simple versus optimal mechanisms''~\cite{HartlineR09}.
The most related to our setting is a series of papers with nearly optimal sequential posted pricing mechanism for unit-demand buyer~\cite{ChawlaHK07,ChawlaHMS10,ChawlaMS15}.
The approximation factor is $8$ against optimal randomized mechanism and $2$ against optimal item pricing. For the same unit-demand pricing problem~\cite{CaiD11} designs a 
PTAS and quasi-PTAS for some special classes of distribution families, although these solutions are not as simple as Sequential Pricing Mechanism (SPM)
in~\cite{ChawlaHK07,ChawlaHMS10,ChawlaMS15}. Papers~\cite{ChawlaHMS10,ChawlaMS15} also extend single buyer results to BIC mechanisms for many buyers. However, even 
the multi-buyer case of SPM is closely related to an important primitive mechanism $\myopicname$ in our work, as we show in Appendix~\ref{app:sec:spm}.
Similar to our setting~\cite{CaiD11,ChawlaHK07,ChawlaHMS10,ChawlaMS15} consider Bayesian setting 
with independent values\footnote{This work considers not necessarily identical distributions.}. The difference with our work is that in our model the menu complexity 
affects utility of the buyer. On the other note, there is a large body of literature on Bayesian mechanism design, which is too big to describe here. 
For an extensive survey on the topic see~\cite{ChawlaS14}.

The main technique in~\cite{ChawlaHK07,ChawlaHMS10,ChawlaMS15} is based on the prophet inequality,
which first appeared in~\cite{SamuelCahn84} for a simple gambler's problem and was first adopted to mechanism design 
literature in~\cite{ChawlaHK07,HajiaghayiKS07}. 
For more recent results on prophet inequality see~\cite{AbolhassaniEEHK17,AzarKW14}.
Other application of prophet inequalities include, e.g., optimization on matroids~\cite{KleinbergW12}, and
Bayesian combinatorial auctions~\cite{FeldmanGL15}. 
Closer to our problem, \cite{DuttingFK16} explicitly study 
the revenue gap between discriminatory and anonymous sequential posted pricing. It was shown 
in~\cite{ChawlaHK07, ChawlaHMS10, DuttingFK16} that the revenue gap between optimal Sequential Pricing Mechanism (SPM) 
and optimal uniform-SPM is $2$. Some of these techniques and general philosophy of approximation with simple pricing schemes 
is adopted in our work.

We assume certain simple and reasonable impatient behavior of the buyer that fits into 
the behavioral economics model of rational inattention~\cite{Sims03}. Another line of work on behavioral economics 
models~\cite{KleinbergO14,TangTWXX17,AlbersK17,GravinILP16,KleinbergOR17} in algorithmic game theory literature concerns time-inconsistent planing.

Our model for the buyer search behavior in some ways is similar to \cite{AtheyEllison11}. This paper belongs to a rather 
rich literature in economics studying how consumer search, i.e., situations where consumers incur certain costs to acquire information about products. 
We refer interested readers to~\cite{ratchford2009consumer} for a survey. 
Those models usually study market outcomes and therefore have to assume simpler consumer behavior than in our case. 
In contrast, we assume that the buyer has perfect knowledge about her private valuation and focus on the dynamic 
interaction with the mechanism designer.

Finally, our model is closely related to dynamic mechanism design, as the buyer in our setting decides dynamically
when to leave. Computer science literature on dynamic mechanism design, e.g.,~\cite{PapadimitriouPP16,AshlagiDH16,BalseiroML17}, usually consider
multi-round interaction scenario with certain ex-post IC, or IR constraints. In economics the literature on dynamic mechanism design is rather large, see~\cite{BergemannM10} 
for a detailed survey. The closest to our paper are the papers~\cite{GershkovM10},~\cite{Said09}. In \cite{GershkovM10} 
heterogeneous durable goods are dynamically allocated to randomly arriving impatient buyers, whereas~\cite{Said09} examines the allocation of a sequence of indivisible and perishable goods to a 
dynamic population of patient unit-demand buyers with i.i.d. private valuations.
Our setting is simpler than~\cite{GershkovM10} and \cite{Said09} with only one buyer and fixed arrival time. On the other hand, we consider more complex dynamic model for the buyer
that interpolates between patient and myopic behaviors and captures the explore-or-exploit tradeoff faced by her.

\subsection{Our Results and Techniques}
\label{sec:results}
As we already observed in Section~\ref{sec:example} some approximation analysis may be necessary in our setting. As it turns out the problem of 
finding approximately optimal mechanism is tractable and we can find a relatively simple mechanism that achieves a constant approximation to the 
optimal revenue (see Theorem~\ref{thm:one_time}). The high-level principles behind our solution are similar to those used in mechanism~\ref{tab:good_menu} from Section~\ref{sec:example}.
The mechanism~\ref{tab:good_menu} and more generally a family of simple-to-analyze mechanisms with approximately optimal revenue can be described as 
follows.

\begin{definition*}[Bait Mechanisms $\baitfamily$]
All items on the menu can be divided into the two categories.  
\begin{description}
	\item[Bait items.] Normally, these are the cheap items with a high probability to be liked by the buyer. Bait items encourage the buyer to continue browsing the menu for multiple rounds. 
	For each menu page $\Menu(t)$, at most $2$ different prices are used for the bait items.
	\item[Expensive items.] These items generate revenue. Normally, the buyer would not like any of the expensive items over the bait items
	at any given menu page. However, in a long run the buyer still might find an expensive item more preferable over all previous items.
	For each menu page $\Menu(t)$, expensive items have the same or similar price\footnote{If the distribution $\distone$ satisfies certain reasonable condition (formally specified later) we can 
	use uniform price for the expensive items. For arbitrary distributions the seller may need to use variable prices. Still, all the prices on the expensive items will be 
	within a constant factor from each other and can be computed efficiently.}.
\end{description}
\end{definition*}

Intuitively, bait items play a role of attracting the buyer to browse through many items until she sees an expensive item that she likes. After that, with certain probability, she stops at the next page and takes that expensive item. We are only interested in the revenue extracted when this event occurs, i.e., an expensive item is sold. We also note that a single menu of size $k$ also 
belongs to the family $\baitfamily$, as it uses only expensive items to extract revenue in a single round.

\paragraph{Techniques.} At a high level, our proof proceeds by iteratively simplifying a given mechanism to 
the desired format so that the seller suffers only a constant factor loss to the revenue. We start with the optimal pricing scheme and separate
prices on every menu page into bait and expensive parts. The analysis and simplification of the expensive items is relatively easy,
as we can use standard techniques to approximate the revenue with uniform pricing. The analysis of the bait part, however, is much trickier,
since we need to care about concentration of the buyer's utility at every time step and find a balance between the bracket of the utility increment 
and our confidence estimate of the utility (as a random variable) to fit into this bracket. One of the key ingredients in our proof 
is Lemma~\ref{lem:utility_control} that allows us to separate the analysis of the bait items to individual menu pages.
In mathematical terms, Lemma~\ref{lem:utility_control} states that one can find an almost disjoint partition of confidence intervals 
for independent random variables provided that the sequence of these variables is monotonically increasing. Another important step 
in our proof is Lemma~\ref{lem:baits_2_prices} that shows that two different prices suffice to do approximate bi-criteria optimization for 
(i) the buyer's utility increment, and (ii) the confidence bound for the utility bracket. With the help of these two lemmas, we 
show that a bait mechanism can achieve similar level of control over the buyer's utility as the optimal mechanism.
Quite surprisingly, it is not that simple to add the expensive items back to the simplified bait menu pages. The reason is that the expensive 
items may interfere with the effect of bait items in such a way that the buyer would stop because of an expensive item, but eventually
she still prefers to buy a cheap bait item. Because of this reason we have to tune the prices of the expensive items 
(although, only within a constant factor) to avoid the latter problem.

%

\section{Preliminaries}
\label{sec:prelim}
The monopolist sells $m$ items to the impatient unit-demand buyer, i.e., the buyer is only interested in buying at most one item. 
The buyer's values for the items are drawn i.i.d. from a given distribution $\distone$, which is known to the
seller. For computational reasons we assume that $\distone$ has a finite support\footnote{All our non computational results extend to the case when $\distone$
is arbitrary distribution with a bounded support on $[0,\infty)$.}  on $[0,\infty)$. We write buyer's aggregate valuation profile as $\valv \in \reals^m_{\ge 0}$.

A mechanism $\mech$ is defined as the sequence of menus $\{\Menu(t)\}_{t=1}^{\infty}$, where each $\Menu(t)$ is a list of at most $k$ item prices.
For each item price $(i,\pricei)$ on the menu, the buyer derives utility of $\utili=\vali-\pricei$, if she takes this offer.
At each stage $t$ of the mechanism $\mech$, the buyer is shown the menu page $\Menu(t)$, 
and her utility $\util(t) \eqdef \max_{(i,\pricei)\in \Menu(t)} \utili$.
To simplify notations, we define $\util(0)\eqdef 0$ at stage $0$.
In this work we study posted price mechanisms, i.e., we assume that the price of any item, once it is posted cannot be changed at a later stage. 
Equivalently, the seller is allowed to show every item to the buyer only once. Therefore, as all items are symmetric, we shall omit the items' 
identities when describing a menu page, i.e., each menu page $\Menu(t)$ will be given as a set of prices.
The buyer's decision at time $t$ of whether to stop, or to continue to the next page depends on her utility increment compared to the previous stage. 
She continues when her utility $\util(t)$ at stage $t$ increases at least by a given parameter $\Delta$ compared to the previous stage, i.e., $\util(t) \ge \util(t-1) + \Delta$.
Otherwise, the buyer stops and takes the best offer from the menu $\bigcup_{s=1}^{t}\Menu(s)$.
In other words, if $s^*$ is the stopping time, then the buyer takes the offer $(i^*, \price^*)\in\bigcup_{t=1}^{s^*}\Menu(t) : \utili[i^*]=\max_{t\le s^*}\util(t)$ and pays $\price^*$\footnote{In case of a tie we assume that the buyer takes the highest price offer.}.
We denote by $\rev(\mech) \eqdef \expect[\valv\sim\distone^m]{\rev(\mech(\valv))}$ the total expected revenue of a given mechanism $\mech$ and by
$\opt \eqdef \max_{\mech}\rev(\mech)$ the revenue of the optimal mechanism.

\paragraph{Greedy Buyer.} To facilitate the analysis, we consider a very simple greedy price taking behavior of the buyer, which will allow 
us to give a simple upper bound on the revenue of any mechanism. To obtain this bound, we order the prices in a finite menu $M=\{\price_t\}_{t=1}^n$ in 
a decreasing order and show them one by one to the buyer. The buyer takes the first item that gives her non-negative utility.
Then the revenue of $\myopic{\valv}=\max\left\{\price \left | \pricei\in M: \vali \ge \pricei \right .\right\}$.

\begin{definition}[Greedy] $\myopic{M}\eqdef\Ex[\valv]{\myopic{\valv}}$ for a given menu $M$.
\end{definition}

We slightly abuse the previous notation and denote by $\myopic{n}$ the maximum revenue of a menu of size $n$ that can be extracted from a greedy buyer, 
i.e., $\myopic{n} \eqdef \max_{|M|\le n}\myopic{M}$.

\paragraph{Uniform Pricing.}
We define $\upricing(\ell,\price)$ to be the expected revenue of offering a uniform item price $p$ in the menu of size $\ell$. Specifically,
$\upricing(\ell, \price) = (1-F(\price)^\ell) \cdot \price.$ We again slightly abuse this notation and use $\upricing(\ell)$ to denote the 
optimal revenue achieved by posting the optimal uniform price in the menu of size $\ell$.

\begin{definition}[U-Pricing]	$\upricing(\ell) \eqdef \max_{\price} \upricing(\ell,\price)$.
\end{definition}

We establish the following simple property of the uniform pricing mechanism.
\begin{claim}
	\label{cl:umenu_half}
$\upricing(c \cdot \ell) \le c \cdot \upricing(\ell)$ for all $c,\ell\in\N$.
\end{claim}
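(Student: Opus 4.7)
The plan is to reduce the claim to a one-variable polynomial inequality via a simple price-matching argument. Let $p^*$ denote an optimal uniform price for the menu of size $c\ell$, so that $\upricing(c\ell) = \bigl(1 - F(p^*)^{c\ell}\bigr)\,p^*$. Since $\upricing(\ell)$ is defined as a maximum over prices, plugging $p^*$ into the size-$\ell$ problem immediately yields
\[
\upricing(\ell) \;\ge\; \bigl(1 - F(p^*)^{\ell}\bigr)\,p^*.
\]
It therefore suffices to show that $1 - F(p^*)^{c\ell} \le c\bigl(1 - F(p^*)^{\ell}\bigr)$, as multiplying both sides by $p^*$ and chaining with the previous inequality gives exactly $\upricing(c\ell) \le c \cdot \upricing(\ell)$.

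The remaining step is to verify the scalar inequality $1 - x^c \le c(1 - x)$ for every $x \in [0,1]$ (taking $x = F(p^*)^\ell$). I would establish this by the standard factorization
\[
1 - x^c \;=\; (1 - x)\sum_{i=0}^{c-1} x^i,
\]
and then bound each term $x^i$ by $1$ using $x \in [0,1]$, which gives $\sum_{i=0}^{c-1} x^i \le c$ and hence the desired inequality.

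There is no real obstacle here; the only thing to be careful about is that $F(p^*)$ genuinely lies in $[0,1]$ (which holds since $F$ is a CDF) so that the geometric factorization argument applies, and that the inequality $\upricing(\ell) \ge \upricing(\ell, p^*)$ follows just from the definition of $\upricing(\ell)$ as a maximum. Combining these two observations completes the proof in a couple of lines.
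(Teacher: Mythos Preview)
Your proof is correct and follows essentially the same approach as the paper: fix the optimal price $p^*$ for the size-$c\ell$ problem, lower-bound $\upricing(\ell)$ by $\upricing(\ell,p^*)$, and use the scalar inequality $1-x^c\le c(1-x)$ with $x=F(p^*)^\ell$. The paper is terser (it asserts the inequality $(1-F(p)^\ell)\ge\tfrac{1}{c}(1-F(p)^{c\ell})$ without the factorization argument you spell out), but the logic is identical.
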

\begin{proof}
Let $\price$ be the optimal price of $\upricing(c \cdot \ell)$. Consider posting any given price $p$ over $\ell$ items,
$\upricing(\ell) \ge \upricing(\ell, \price) = (1-F(\price)^{\ell}) \cdot \price \ge \frac{1}{c}(1-F(\price)^{c\ell}) \cdot \price = \frac{1}{c} \cdot \upricing(c \cdot \ell).\qedhere$
\end{proof}

\section{Bait Mechanisms}
\label{sec:one_time}
The main result in this section is to show that some bait mechanism from the family $\baitfamily$ gives a constant approximation to the optimal revenue. We further show
that a bait mechanism with constant approximation to the optimum can be computed in polynomial time. 


\begin{theorem}
	\label{thm:one_time}
$\opt \le O(1) \cdot \max_{\mech \in \baitfamily} \rev(\mech)$. Approximately optimal $\mech\in\baitfamily$ can be computed in polynomial time.
\end{theorem}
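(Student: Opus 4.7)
The plan is to start with an optimal mechanism $\mech^*$ and iteratively simplify it into a member of $\baitfamily$ while only losing constant factors in revenue. For each menu page $\Menu^*(t)$, I would split the offered prices into a ``bait set'' $B_t$ (the cheaper prices whose role is to keep the buyer browsing by driving up the running utility $\util(t)$) and an ``expensive set'' $E_t$ (the higher prices whose role is to be purchased). The split is chosen so that the expected revenue of $\mech^*$ is captured, up to a constant, by the event ``the buyer stops at some page $s$ and buys an item from $E_s$''; the remaining revenue, coming from the buyer buying a cheap bait item, is at most a constant times the revenue of a single-page menu, which itself is already a bait mechanism.

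I would handle the expensive items first, as this part is standard. On each page $t$, the contribution from $E_t$ conditional on reaching page $t$ can be upper bounded by $\myopic{|E_t|}$ (the decreasing-list greedy revenue), which is in turn at most $O(1)\cdot\upricing(|E_t|)$ by Myerson-style arguments; combining with Claim~\ref{cl:umenu_half} allows me to replace $E_t$ by a uniform price on $|E_t|$ items without losing more than a constant factor in that page's contribution. This already meets the definition of the expensive part of $\baitfamily$.

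The trickier step is simplifying the bait set $B_t$ on each page while preserving the distribution of the stopping time. The key quantity to control is the utility increment $\util(t)-\util(t-1)$: it must be at least $\Delta$ with sufficient probability so that the buyer continues, yet concentrated enough so that the bracket in which $\util(t)$ lands is predictable. Here I would invoke Lemma~\ref{lem:utility_control} to carve out, across pages $1,2,\dots$, an almost disjoint sequence of confidence intervals for the monotone process $\util(t)$; this lets me analyze each page in isolation as a ``one-shot'' bi-criteria optimization problem (increment versus its concentration). Lemma~\ref{lem:baits_2_prices} then shows that this bi-criteria problem is essentially solved by at most two prices, producing a bait page with the structure required by $\baitfamily$ that reproduces, up to a constant, both the probability of continuation and the distribution of $\util(t)$ produced by $B_t$ in $\mech^*$.

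The main obstacle, and the step I would spend the most care on, is recombining the simplified bait pages with the uniformly priced expensive items. The danger is that an expensive item, although priced high, might still attract the buyer (its offered utility being just above the newly chosen bait utility) and cause her to stop; then by the tie-breaking rule she could end up buying a bait item at the previous page instead of the expensive item I was counting on, costing me the entire gap between the two prices. I would resolve this by slightly inflating the uniform expensive price on each page by a constant factor, enough so that whenever the expensive item does trigger the stopping condition, its price exceeds the running bait utility and so it is actually purchased; the constant inflation only costs another constant factor in revenue. Finally, polynomial-time computability follows because the value distribution $\distone$ has finite support, the relevant utility brackets come from a finite grid of size polynomial in $|\supp(\distone)|$ and the inverse precision, the relevant number of pages is $O(m/k)$ items seen in total (and utility can only increase through polynomially many brackets), and on each page there are only polynomially many candidate bait-price pairs and expensive prices to try; the resulting best bait mechanism can be selected by a straightforward enumeration plus dynamic program over the bracket of $\util(t)$.
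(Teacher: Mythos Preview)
Your high-level plan matches the paper's: simplify the optimal mechanism using Lemmas~\ref{lem:utility_control} and~\ref{lem:baits_2_prices} for the bait part, and bound the expensive part via uniform pricing. Two steps, however, are genuinely missing or wrong.

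First, you never truncate the mechanism. Lemma~\ref{lem:utility_control} needs the sequence $\{u(t)-t\Delta\}_t$ to be nondecreasing with probability $\ge 1-\eps$; in the optimal mechanism the buyer may stop early with probability close to $1$, so this hypothesis can fail badly. The paper therefore first restricts to the first $T$ pages that the buyer reaches with probability $\ge \tfrac{11}{12}$ (Claim~\ref{cl:opt_survive}), and only then separates out the expensive items (globally, not page by page) so that with probability $\ge \tfrac{5}{6}$ the bait-only utilities are monotone. Without this preprocessing, the lemma does not apply.

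Second, your fix for the interference problem does not work. The bad event is that an expensive item on page $s-1$ has utility $v-p$ in the window $[u_b(s)-\Delta,\ u_b(s))$: this makes the buyer stop at page $s$, yet she then purchases the bait item on page $s$. Raising $p$ by a constant factor merely translates this $\Delta$-wide window; it does not shrink it, and whether ``the price exceeds the running bait utility'' is irrelevant to which item is bought (what matters is whether $v-p\ge u_b(s)$). The paper resolves this probabilistically: for $(\Delta,\eta)$-spreading $F$, conditional on $v-p\ge u_b(s)-\Delta$ one has $v-p\ge u_b(s)$ with probability $\ge\eta$. For general $F$ (Section~\ref{app:sec:proofs:one}) the paper shows that for each page there exists a price $p_t\in[p^*/3,\,p^*/2]$ for which this conditional probability is $\ge\tfrac12$; if no such price exists on some page, then an exponential-decay argument implies a single page with uniform price $p^*/3$ already recovers $\Omega(\upricing(\ell))$. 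This per-page tuning of $p_t$ is precisely the ``similar but not identical expensive prices'' in the definition of $\baitfamily$; a uniform constant-factor inflation cannot achieve it.
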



\begin{proof}
Let $\mech_0$ be the optimal mechanism, i.e. $\rev(\mech_0) = \opt$. Our proof strategy will be to simplify the optimal mechanism such that it has a simple structure and at the same time
it extracts a constant fraction of the optimal revenue. First, we truncate the optimal menu so that the buyer goes until the end of the menu with constant probability. 
Let $T$ be the largest page number so that the probability of surviving until time $T$ is least $\frac{11}{12}$, i.e. the buyer sees menu page $\Menu_0(T)$ with probability at least $\frac{11}{12}$ and sees menu page $\Menu_0(T+1)$ with probability smaller than $\frac{11}{12}$. 
Let $\mech_T$ be the mechanism whose corresponding menus $\{\Menu_T(t)\}_{t=1}^T$ contain only the first $T$ pages of $\mech_0$ (the buyer is shown an empty menu at stage $T+1$). 
The next claim states that the revenue achieved by $\mech_T$ approximates $\opt$ within a constant factor.

\begin{claim}
	\label{cl:opt_survive}
	$\opt = \rev(\mech_0) \le 12 \cdot \rev(\mech_T)$.
\end{claim}
\begin{proof}
Let $\tau$ be a random variable that indicates the menu page $\Menu_0(\tau)$ from which the buyer bought her item ($\tau=0$ if nothing was bought).
In the case when $\tau\le T$, the revenue of $\mech_T$ is at least as large as the revenue of $\mech_0$ for each valuation profile with $\tau\le T$.

On the other hand, if $\tau > T$ then the buyer must have seen all the first $T+1$ menu pages, which happens with probability at most $\frac{11}{12}$.
Let us analyze a relaxed version of the optimal mechanism that is allowed to adjust its menu pages at every stage $t>T$ after observing the utility $\util(T)$.
Without loss of generality, we assume that all items in $\cup_{t=1}^{T}\Menu(t)$ get discarded and the relaxed optimum optimizes revenue 
with a smaller supply of the remaining items and worse initial conditions ($\util(T)\ge\util(0)$). Therefore, for each utility level $\util(T)$ the relaxed 
optimal mechanism cannot extract more revenue than $\mech_0$. This implies that the revenue of the optimal mechanism obtained for $\{\vals: \tau> T\}$ is not 
larger than $\frac{11}{12}\cdot\rev(\mech_0)$. Thus $\opt \le \rev(\mech_T) + \frac{11}{12} \cdot \opt$, which concludes the proof.
%
%
\end{proof}

We collect all the item prices in $\mech_T$, i.e., $\bigcup_{t=1}^{T}\Menu(t)$, and sort them in a decreasing order. Furthermore, we greedily pick the highest prices 
from $\bigcup_{t=1}^{T}\Menu(t)$ into a set $\textsf{TOP}$ while the probability that the buyer would like any $i\in \textsf{TOP}$ ($\exists i \in \textsf{TOP}:\utili \ge 0$) is at most $\frac{1}{12}$.
Here we slight abuse the notation of $\textsf{TOP}$, to denote the corresponding set of items.
This convention is also applied later to a menu of prices.
We put $\textsf{TOP}$ expensive items into a collection $\Mexp$ of menus $\{\Mexp(t)\}_{t=1}^{T}$, where $\Mexp(t)\eqdef\Menu(t)\cap \textsf{TOP}$. 
In addition, the remaining items are placed into the collection $\Mbait$ of menus $\{\Mbait(t)\}_{t=1}^{T}$, where $\Mbait(t)\eqdef\Menu(t)\setminus \textsf{TOP}$. 
The item prices in $\Mexp$ and $\Mbait$ after some modification will serve as expensive and bait items in our bait mechanism. 
We denote by $\bar{p}_b\eqdef\max\{p\in\Mbait\}$ and by $\ell\eqdef |\cup_{t}\Mexp(t)|$.
%
By the definition of $\Mexp$, we have
\begin{equation}
1 - \prod_{p \in \Mexp} F(p) \le \frac{1}{12} < 1 - F(\bar{p}_b)\cdot \prod_{p \in \Mexp} F(p) .
\label{eq:expensive}
\end{equation}
We now bound the revenue of $\mech_T$ by the greedy revenue bound applied to the prices $\Mexp$. 

\begin{claim}
	\label{cl:upper_one_time}
	$\rev(\mech_T) \le \myopic{\Mexp} + \bar{p}_b \le 50 \cdot \upricing(\ell)$.
\end{claim}
\begin{proof}
To obtain the first inequality we consider two cases depending on whether the buyer chose item from (i) $\Mexp$, or from (ii) $\Mbait$. We observe that
the expected revenue obtained from items in $\Mexp$ is not more than $\myopic{\Mexp}$ and the expected revenue obtained from the items in $\Mbait$
is not more than $\bar{p}_b$.

To derive the second inequality we use Equation~\eqref{eq:expensive} to obtain
\[
\frac{\bar{p}_b}{12}   \le \left( 1 -  F(\bar{p}_b)\cdot\prod_{p \in \Mexp} F(p) \right)\cdot \bar{p}_b \le \myopic{\Mexp \cup \{\bar{p}_b\}} 
\le \myopic{\ell+1} \le 2 \myopic{\ell}.
\]
Note that the optimal \spm\ (sequential posted pricing for selling one item to many bidders) and \myopicname\ have the same revenue (see Appendix~\ref{app:sec:spm} for more details). We conclude the proof by applying the well-known Fact 
below.
\begin{fact*}[\cite{ChawlaHK07, DuttingFK16}]
	$\textsf{SPM} \le 2\cdot\textsf{U-SPM}=2\cdot\upricing.$
	\qedhere
\end{fact*}
\end{proof}


In the following we first analyze the bait items and show that comparable control over buyer's utilities can be achieved with a collection of simple menus.
Let $u(t)$ be the buyer's utility derived from the menu page $\Mbait(t)$ and $x(t) = u(t)-t\Delta$, for all $t \in [T]$.
\begin{claim}
\label{cl:sq_non_decreasing}
$\{x(t)\}_{t=1}^{T}$ is a non-decreasing sequence with probability at least $\frac{5}{6}$.
\end{claim}
\begin{proof}
If the buyer does not like any item from $\Mexp$, she behaves exactly the same as if she was offered menus $\Mbait$ instead of $\Menu_T$.
Therefore, $\prob{\text{buyer sees all } \Mbait(t)}\ge \prob{\text{buyer sees all } \Menu_T(t)} - \prob{\exists i\in\Mexp:\utili\ge 0} 
\ge \frac{11}{12}-\frac{1}{12}=\frac{5}{6}.$
As the buyer gets to see all menu pages of $\Mbait$ if and only if $\{x(t)\}_{t=1}^{T}$ is non-decreasing, we conclude the proof.
\end{proof}

In fact, we can have a separation of the supports of random variables $x(t)$'s with only a constant factor loss in probability.
The following lemma is the central piece of our analysis which will allow us to achieve good control over the buyer's utility $\{\util(t)\}_{t=1}^{T}$.
\begin{lemma}
	\label{lem:utility_control}
Given $n$ independent random variables $\{x_i\}_{i=1}^{n}$.
If $\prob{0 \le x_1 \le x_2 \le \cdots \le x_n} \ge 1 - \eps$, there exist thresholds $0 = \alpha_0 \le \alpha_1 \le \cdots \le \alpha_n < \alpha_{n+1} = \infty$ such that
\[\prob{\forall i\le\dround{\frac{n-1}{2}}, x_{2i+1} \in [\alpha_{2i}, \alpha_{2i+2}]} \ge 1-2\eps;\quad 
\prob{\forall i\le\dround{\frac{n}{2}}, x_{2i} \in [\alpha_{2i-1}, \alpha_{2i+1}]} \ge 1-2\eps.
\]
\end{lemma}

\begin{proof}
Let $\alpha_i$ be the median of $x_i$, i.e., $\prob{x_i \ge \alpha_i} \ge \frac{1}{2}$ and $\prob{x_i \le \alpha_i} \ge \frac{1}{2}$.
We only give the proof to the first statement, as the second one can be derived by the same argument.
If the property does not hold, let $j$ be the smallest index such that $x_{2j+1} \notin [\alpha_{2j}, \alpha_{2j+2}]$. Then either $x_{2j+1} < \alpha_{2j}$, or $x_{2j+1}>\alpha_{2j+2}$. 
Note that the set of random variables $\{x_{2i}\}$ is independent of the choice of $j$ and realization of $\{x_{2i+1}\}$.
In the first case, $x_{2j} \ge \alpha_{2j} > x_{2j+1}$ happens with probability (for a fixed $x_{2j+1}$ and random $x_{2j}$) at least $\frac{1}{2}$. In the second case, $x_{2j+2} \le \alpha_{2j+2} < x_{2j+1}$ happens with probability at least $\frac{1}{2}$. In either case, the monotonicity of $\{x_i\}_{i=1}^{n}$ is violated with probability at least $\frac{1}{2}$. Therefore, 
$
\frac{1}{2} \prob{\exists i, x_{2i+1} \notin [\alpha_{2i}, \alpha_{2i+2}]} \le \prob{\exists i, x_i>x_{i+1}} \le \eps.
$
\end{proof}

We apply this lemma to the above random variables $\{x(t)\}_{t=1}^{T}$, and get a non-decreasing sequence $\{\alpha_t\}_{t=1}^{T}$:
\[
\prob{\forall t\le\dround{\frac{T-1}{2}}, x(2t+1) \in [\alpha_{2t}, \alpha_{2t+2}]} \ge \frac{2}{3}; \quad
\prob{\forall t\le\dround{\frac{T}{2}}, x(2t) \in [\alpha_{2t-1}, \alpha_{2t+1}]} \ge \frac{2}{3}.
\]
Let $\Mbaite$ and $\Mbaito$ be the even and odd pages of $\Mbait$ respectively. 
Recall that $\Mbait$ is obtained by removing $\ell$ $\textsf{TOP}$ items from $\Menu_T$. 
Hence, either $\Mbaite$ or $\Mbaito$ has at least $\frac{\ell}{2}$ empty spaces. 
Without loss of generality, we assume $\Mbaite$ has more empty spaces. For 
technical reasons, we remove the last page of $\Mbaite$. Then, $\Mbaite$ has at least $\frac{\ell}{2}-k$ empty spaces on all menu pages.
Let $\lutil[t] = \alpha_{t-1} +t\Delta$ and $\uutil[t] = \alpha_{t+1}+t\Delta$, then
\begin{equation}
\label{eq:even_pages_util}
	\prob{\forall 1 \le  t \le \dround{\frac{T}{2}}, \util(2t) \in [\lutil[2t], \uutil[2t]]} \ge \frac{2}{3}.
\end{equation}

In the remainder of the proof of Theorem~\ref{thm:one_time} we are going to focus only on the pages of $\Mbaite$.
Let $\eT$ be the total number of pages of $\Mbaite$. To simplify notations, we will be using $\util(t)$, $\lutil[t]$, and $\uutil[t]$ to refer to 
the utility derived from the menu page $\Mbaite(t)$ and the corresponding lower and upper bounds.
That is, 
\begin{equation}
\prob{\forall t\in[\eT], u(t) \in [\lutil, \uutil]} \ge \frac{2}{3}.
\label{eq:bait_util}
\end{equation}
Observe that by our construction, $\lutil \ge \uutil[t-1] + \Delta$ for all $t\le\eT$.
The Claim~\ref{cl:uutil_T} below shows that the upper bound $\uutil[T_e]$ can be easily recovered by the revenue of a single menu page with $k$ uniformly priced items.

\begin{claim}
	\label{cl:uutil_T}
	$\uutil[T_e] + \Delta \le \frac{3}{2} \cdot \upricing(k)$.
\end{claim}
\begin{proof}
Recall that we remove the last page of original $\Mbaite$. Denote the page by $M$.
We know that with probability at least $\frac{2}{3}$, the buyer's utility after seeing menu page $M$ is more than $\uutil[\eT] + \Delta$. 
Consider showing a menu with $k$ items priced at $0$. Note that the utility of seeing this menu stochastically dominates the utility of seeing $M$. Thus, the buyer has utility at least $\uutil[\eT] + \Delta$ with probability at least $\frac{2}{3}$. Finally, consider showing a single page with $k$ items priced at $\uutil[\eT] + \Delta$, we have $\upricing(k) \ge \frac{2}{3} (\uutil[\eT] + \Delta)$.
\end{proof}

The next important step in our analysis is to modify $\Mbaite$ so that on each menu page $\Mbaite(t)$, there are at most two different prices.

\begin{lemma}
	\label{lem:baits_2_prices}
Suppose $\prob{u(t) \in [\lutil[t], \uutil[t]]} = 1-\eps_t$. There exists a menu page $\Menux(t)$ with at most two different prices, 
such that $|\Menux(t)|=|\Mbaite(t)|$ and $\prob{\utilx(t) \in [\lutil, \uutil]} \ge 1-2\eps_t$, where $\utilx(t)$ is utility derived from $\Menux(t)$.
\end{lemma}
\begin{proof}
Let $\{p_i\}_{i=1}^{n}$ be all $n$ item prices that appear on the page $\Mbaite(t)$. We know that 
\[
\prob{u(t) \in [\lutil, \uutil]} = \prod_{i=1}^{n} F(p_i+\uutil) - \prod_{i=1}^{n} F(p_i+\lutil) = 1-\eps_t.
\]
Let $a = \prod_{i=1}^{n} F(p_i+\uutil)$ and $b = \prod_{i=1}^{n} F(p_i+\lutil)$. Consider $n$ points $(\ln F(p_i+\uutil), \ln F(p_i+\lutil))$ in $\R^2$. 
By the definition of $a,b$, the center of mass of these $n$ points is $(\frac{\ln a}{n}, \frac{\ln b}{n})$. Since the center of mass must lie inside the convex hull of these points,
there exists a convex combination of just $2$ points that lies below and to the right from the center. In other words, there exists $x\in [0,n]$ and $i_1,i_2 \in [n]$ such that
\[
x \cdot \ln F(p_{i_1}+\uutil) + (n-x) \cdot \ln F(p_{i_2}+\uutil) \ge \ln a \quad\text{and}\quad
x \cdot \ln F(p_{i_1}+\lutil) +(n-x) \cdot \ln F(p_{i_2}+\lutil) \le \ln b.
\]
Without loss of generality, let us assume $p_{i_1} \le p_{i_2}$. We construct a menu page $\Menux(t)$ with $\upround{x}$ items priced at $p_{i_1}$ and $n-\upround{x}$ items priced at $p_{i_2}$.
Then we have
\begin{align*}
\prob{\utilx(t) \in [\lutil, \uutil]} & = F(p_{i_1}+\uutil)^{\upround{x}} F(p_{i_2}+\uutil)^{n-\upround{x}} - F(p_{i_1}+\lutil)^{\upround{x}} F(p_{i_2}+\lutil)^{n-\upround{x}} \\
& \ge F(p_{i_1}+\uutil) \cdot F(p_{i_1}+\uutil)^{x} F(p_{i_2}+\uutil)^{n-x}  - F(p_{i_1}+\lutil)^{x} F(p_{i_2}+\lutil)^{n-x} \\
& \ge (1-\eps_t) \cdot a - b \ge 1-2\eps_t.
\end{align*}
The first inequality follows from the fact that $F(y)\le 1$ for all $y\in\R_{\ge 0}$ and $F(p_{i_1}+\lutil)\le F(p_{i_2}+\lutil)$.
The second inequality follows from the fact that $F(p_{i_1}+\uutil) \ge \prod_{i=1}^{n} F(p_i+\uutil) \ge 1-\eps_t$.
\end{proof}

Finally, we put all the pieces together and prove Theorem~\ref{thm:one_time}. 
Briefly speaking, Lemma~\ref{lem:baits_2_prices} allows us to simplify bait menus with a good control over the buyer's utility and suffer only constant factor losses in the 
success probability and the number of expensive items we could show together with the bait items.

The next natural step is to fill the gaps in the menus of $\{\Menux(t)\}_{t=1}^{\eT}$ with expensive items and hope that the buyer chooses one of them. 
Though the idea is clean, the technical details are involved.
To make the analysis simpler and highlight the structure of the bait mechanism we make a mild assumption on the distribution $\distone$.
The complete proof of the general case is deferred to Section~\ref{app:sec:proofs:one}. We assume $\distone$ is a $(\Delta, \eta)$-spreading distribution (see Definition~\ref{def:spreading} below) and prove that the revenue achieved by a bait mechanism is $O(\frac{1}{\eta})$-approximation to the optimal revenue. 
\begin{definition}[$(\Delta, \eta)$-Spreading]
\label{def:spreading}
A distribution $F$ is a $(\Delta, \eta)$-spreading distribution if $\prob[x\sim F]{x\ge p|x\ge p-\Delta} \ge \eta$ for all $p$ in the support of $\distone$. 
\end{definition}
For example exponential and $\textsf{Uniform}[n]$ distributions are $(\Delta, \eta)$-spreading for small enough $\eta$ and any fixed $\Delta$.
On the other hand, normal distribution is not $(\Delta, \eta)$-spreading for any $\eta,\Delta$. The problem that such distributions pose for our analysis is that
expensive items may interfere with the effect of the bait items causing the buyer to stop her search early, but instead of choosing such expensive item the buyer 
would likely take a cheap bait item.

\noindent \textbf{Proof of Theorem~\ref{thm:one_time}.\quad}
Let $p^*$ be the optimal price for $\upricing(\frac{\ell}{2})$.
We first consider an easy case when $p^* \le  2\uutil[T_e]$. By Claim~\ref{cl:opt_survive} and~\ref{cl:upper_one_time}, it suffices to give an upper bound on $\upricing(\ell)$. We have
\[
\upricing(\ell) \le 2 \cdot \upricing\left(\ell/2\right) \le 2 \cdot p^*  \le 4 \cdot \uutil[\eT]
\le 6 \cdot \upricing(k) \le 6 \cdot \max_{\mech \in \mathcal{B}} \rev(\mech),
\]
where the first inequality follows from Claim~\ref{cl:umenu_half} and the second to the last inequality follows from Claim~\ref{cl:uutil_T}.

Now we assume $p^* >  2\uutil[\eT]$. Let $\oprice \eqdef p^* - \uutil[\eT]$. We consider showing one menu page over $k$ items priced at $\oprice$. If the selling probability $1-F^k(\oprice) \ge \frac{1}{2}$, we have
\[
\upricing(k) \ge (1-F^k(\oprice)) \cdot \oprice \ge \frac{1}{2} \cdot \frac{p^*}{2} = \frac{p^*}{4} \ge \frac{1}{8} \cdot \upricing(\ell).
\]

We assume $1-F^k(\oprice) < \frac{1}{2}$ in the following.
We apply Lemma~\ref{lem:baits_2_prices} to all menu pages of $\{\Mbaite(t)\}_{t=1}^{\eT}$ and denote the new menu as $\{\Menu(t)\}_{t=1}^{\eT}$. 
We fill the $(\frac{\ell}{2}-k)$ gaps in the empty slots of $\Menu$ with expensive items priced at $\oprice$.
Then we add an extra menu page with $k$ expensive items priced at $\oprice$ at the end of $\Menu$.
We denote this collection of menus as $\Menu_B$ and the corresponding mechanism as $\mech_B$. Observe that $\Menu_B$ has $(\eT+1)$ menu pages.
The mechanism $\mech_B$ is a bait mechanism with $\Menu$ items being the bait items.

We now show that $\rev(\mech_B) \ge \frac{\eta}{6} \cdot \left( \upricing(\frac{\ell}{2}) - \uutil[\eT] \right)$.
Let $\utilb(t)$ be the buyer's utility derived from the bait item on page $t$, i.e., those on the menu page $\Menu(t)$. Let $\expensive$ be the
set of expensive items, i.e., not bait items, in $\Menu_B$. We will study the revenue of $\mech_B$ only obtained when the following event occurs
$	\eventi[1]\eqdef\left\{\vals:\utilb(t)\in [\lutil, \uutil]~~\forall t \in [\eT]\right\}$.
By Lemma~\ref{lem:baits_2_prices} and Fact~\ref{fa:eps_to_2eps}, we know that
\begin{equation}
\label{eq:prob_e1}
\prob{\bold{E_1}} = \prob{\forall t\in[\eT], \utilb(t) \in [\lutil, \uutil]} \ge \prod_{t\in[\eT]}(1-2\eps_t) \ge \frac{1}{3},
\end{equation}
since $\prod_{t\in[\eT]} (1-\eps_t) \ge \frac{2}{3}$ by Equation~(\ref{eq:bait_util}).

\begin{fact}
	\label{fa:eps_to_2eps}
	$\prod_{t}(1-2\eps_t) \ge 2 \cdot \prod_{t}(1-\eps_t) - 1.$
\end{fact}
\begin{proof}
	Let $h(\vec{\eps}) = \prod_{t=1}^{T} (1-2\eps_t) + 1 - 2 \prod_{t=1}^{T} (1-\eps_t)$.
	Then $\frac{\partial h}{\partial \eps_s} = -2 \prod_{t\ne s} (1-2\eps_t) + 2 \prod_{t\ne s} (1-\eps_t) \ge 0$.
	It follows that the minimum of $h$ is achieved when $\eps_t=0$ for all $t$, in which case $h(\vec{0}) = 0$.
\end{proof}

\begin{claim*}
The revenue of $\mech_B$ conditioned on $\eventi[1]$ is at least $\frac{\eta}{2} \cdot \left( \upricingmech{\frac{\ell}{2}} - \uutil[\eT] \right)$.
\end{claim*}
\begin{proof}
Let $\expensive_t$ be the expensive items on page page $t$, i.e. $\expensive_t = \Menu_B(t) \cap \expensive$.
We consider the first time $s$ (may not exist) that $\expensive_{s-1}$ interfere with the effect of bait items on page $s$,
i.e., 
\[
\max_{i \in \expensive_{s-1}} (\vali - \oprice) \ge \utilb(s) - \Delta.
\]
First, let us assume that such time $s\in\N$ exists.
Note that, conditioned on $\eventi[1]$, the buyer always continues to the next page if the above event does not happen at stage $s$.
Indeed, by construction $\lutil[s]-\uutil[s-1] \ge \Delta$ and $\utilb(s)-\utilb(s-1) \ge \lutil[s]-\uutil[s-1]$ by the definition of $\eventi[1]$ and 
if $\utilb(s) - \max_{i \in \expensive_{s-1}} (\vali - \oprice) \ge \Delta$ then $\utilb(s)\ge \util(s-1)+\Delta$.

Furthermore, we only consider the case when the buyer does not like any expensive items $\expensive_s$ on page $s\in\N$. This happens with probability at least 
$\frac{1}{2}$, as $1-F^k(\oprice) < \frac{1}{2}$. Denote this event by $\eventi[2]$.
Consequently, the buyer stops at stage $s$ when $\eventi[2]$ happens.
Let $r = |\expensive_{s-1}|$. Then $\val \eqdef \max_{i \in \expensive_{s-1}} \vali$ is drawn according to $F^{r}$.
Observe that $v$ is independent of $\eventi[1], \eventi[2]$.
By the definition of $(\Delta, \eta)$-spreading distribution, we have
\begin{align*}
\Prlong[v\sim F^{r}]{\val \ge \utilb(s)+\oprice \left| \val \ge \utilb(s)+\oprice-\Delta\vphantom{\sum_i^i}\right.} 
&= \frac{1-F^{r}(\utilb(s)+\oprice)}{1-F^{r}(\utilb(s)+\oprice-\Delta)} \\
&\ge  \frac{1-F(\utilb(s)+\oprice)}{1-F(\utilb(s)+\oprice-\Delta)} \ge \eta.
\end{align*}
That is, conditioned on the buyer stopping at stage $s\in\N$, i.e. $\val - \oprice \ge \utilb(s)-\Delta$, the probability that she buys an expensive item is at least $\eta$.
Then the expected revenue is at least $\prob{\eventi[2]} \cdot \eta \oprice \ge \frac{\eta}{2}\oprice$ for all $s\in\N$.
We are left to give a lower bound on the probability that $s$ exists.
We claim that $s\in\N$ when $\max_{i \in \expensive} \vali \ge p^*$. Indeed, we have, $\utilb(s) + \oprice - \Delta \le \uutil[\eT] + \oprice - \Delta \le p^*$ for all $s$.

We conclude that conditioned on $\eventi[1]$, the revenue is at least 
\[
\prob{\max_{i \in \expensive} v_i \ge p^*} \cdot \frac{\eta}{2} \oprice \ge \frac{\eta}{2} \left( \prob{\max_{i \in \expensive} v_i \ge p^*} \cdot p^* - \uutil[\eT] \right) = \frac{\eta}{2} \cdot \left( \upricingmech{\frac{\ell}{2}} - \uutil[\eT] \right).\qedhere
\]

\end{proof}

%
%

Overall, we have the following revenue guarantee of $\mech_B$,
\[
\rev(\mech_B) \ge \prob{\eventi[1]} \cdot \rev(\mech_B | \eventi[1]) \ge \frac{\eta}{6} \cdot \left( \upricingmech{\frac{\ell}{2}} - \uutil[\eT] \right).
\]

Combining this with Claim~\ref{cl:umenu_half} and~\ref{cl:uutil_T}, we have
\begin{align*}
\upricing(\ell) & \le 2 \cdot \upricingmech{\frac{\ell}{2}} = 2 \cdot \left( \upricingmech{\frac{\ell}{2}} - \uutil[\eT] + \uutil[\eT] \right) \\
& \le \frac{12}{\eta} \cdot \rev(\mech_B) + 3 \cdot \upricing(k) \le O\left(\frac{1}{\eta}\right) \cdot \max_{\mech \in \mathcal{B}} \rev(\mech). 
\end{align*}

\paragraph{Computations.} Here we discuss how to compute approximately optimal $\mech\in\baitfamily$ via polynomial time dynamic programming (DP). 
We recall that in the above construction and analysis of the bait mechanism the buyer's utility from the bait 
items at different menu pages have disjoint supports, i.e., with a constant probability we can restrict all $\utilb(t)$ to lie in the specified disjoint intervals. This crucial 
fact allows us to separate the pricing problem of the bait items into independent problems for each individual menu page. 
Indeed, we only need to care about the upper and lower bounds of the confidence interval $[\lutil, \uutil]$ of $\utilb(t)$ on each menu page. 
Another important feature of the constructed bait mechanism is very limited interaction between the bait and expensive items. Namely, the revenue of the bait
mechanism can be described by a single parameter -- the total number of the available slots left for the expensive items. 

More specifically, our DP works as follows. We dynamically fill a two dimensional array $D\InBrackets{\uutil,\ell} \in [0,1]$, where 
$\uutil$ is the upper confidence bound on $\utilb(t)$ and $\ell$ is the total number of the slots available for the expensive items. The value of $D\InBrackets{\uutil,\ell}$ at time $t$ 
represents the highest possible success probability for the bait items to lead the buyer from stage $1$ to stage $t$ such that $\utilb(t)\le\uutil$ and the total number of 
expensive items slots is $\ell\le t\cdot k$. For each $t$ we can efficiently compute $D\InBrackets{\uutil,\ell}$ by setting $\lutil=\uutil[t-1]+\Delta$ and
using $D\InBrackets{\cdot,\cdot}$ at the previous step $t-1$. Note that in each iteration we only need to search over two different prices and over $k$ possible sizes for 
the bait items on the $t$-th menu page. When we run out of the supply $m$, we choose time $T\le \frac{m}{k}$ and the maximal $\ell$ such that 
$D\InBrackets{\uutil[T], \ell}\ge\frac{1}{3}$. Then using the tables $D\InBrackets{\cdot,\cdot}$ for all $t\le T$ we can recursively find good schedule of bait items
and corresponding confidence intervals $\{[\lutil,\uutil]\}_{t=1}^{T}$ that allows us to show $\ell$ expensive items to the buyer with constant probability.
 
Finally, it is easy to calculate the optimal uniform prices for $\ell$ expensive items and obtain the desired guarantee for the bait mechanism in the case 
when value distribution $F$ is $(\Delta, \eta)$-spreading. In the case, when the distribution $F$ is not $(\Delta, \eta)$-spreading we need to do a little bit more work.
However, the task is not very difficult as from DP computations we know the distribution of $\utilb(t+1)$ and interval $[\lutil[t+1],\uutil[t+1]]$ for each $t\in\N$
and can optimize the uniform price $p^*_t$ independently for each menu page $t$.
%
%
\end{proof}


%
%

\section{Proof of Theorem 3.1 for General Distributions}
\label{app:sec:proofs:one}
%
%

\begin{proof}
We follow the proof of Theorem~\ref{thm:one_time} as before Definition~\ref{def:spreading}. Let $p^*$ be the optimal price for $\upricing(\frac{\ell}{2})$.
We first consider an easy case when $p^* \le  6 \uutil[\eT]$. By Claim~\ref{cl:opt_survive} and~\ref{cl:upper_one_time}, it suffices to give an upper bound on $\upricing(\ell)$. We have
\[
\upricing(\ell) \le 2 \cdot \upricingmech{\frac{\ell}{2}} \le 2 \cdot p^*  \le 12 \cdot \uutil[\eT] \le 18 \cdot \upricing(k) \le 18 \cdot \max_{\mech \in \baitfamily} \rev(\mech),
\]
where the first inequality follows from Claim~\ref{cl:umenu_half} and the second to the last inequality follows from Claim~\ref{cl:uutil_T}.

Now we assume $p^* > 6 \uutil[\eT]$.
We consider the selling probability of showing one menu page with $k$ items priced at $\frac{p^*}{3}$. If $1-F^k\left(\frac{p^*}{3}\right) \ge \frac{1}{2}$, we have
\[
\upricing(k) \ge \left(1-F^k\left(\frac{p^*}{3}\right)\right) \cdot \frac{p^*}{3} \ge \frac{1}{2} \cdot \frac{p^*}{3} = \frac{p^*}{6} \ge \frac{1}{12} \cdot \upricing(\ell).
\]

We assume $1-F^k\left(\frac{p^*}{3}\right) < \frac{1}{2}$ in the following.
Let $\{\Menu(t)\}_{t=1}^{\eT}$ be the menu pages derived from Lemma~\ref{lem:baits_2_prices} for $\{\Mbaite(t)\}_{t=1}^{\eT}$.
Let $\utilb(t)$ be the buyer's utility derived from $\Menu(t)$.
For each $t$, let $G_t(\cdot)$ be the cumulative density function of $\utilb(t)$ conditioned on that $\utilb(t) \in [\lutil, \uutil]$. Let $\ell_t$ be the number of empty slots on page $\Menu(t)$.
We first consider the case that $\exists t$, such that $\forall p \in [\frac{p^*}{3}, \frac{p^*}{2}]$,
\begin{equation}
\int_{\lutil}^{\uutil} \left( 1-F^{\ell_{t-1}}(p+u) \right) dG_t(u) \le \frac{1}{2} \cdot \int_{\lutil}^{\uutil}\left( 1-F^{\ell_{t-1}}(p-\Delta+u) \right) dG_t(u).
\label{eq:price_t}
\end{equation}
Let $h(i) = \int_{\lutil}^{\uutil}\left( 1-F^{\ell_{t-1}}(\frac{p^*}{2}-i\Delta+u) \right) dG_t(u)$. The above inequality implies that $h(i) \ge 2 h(i-1)$ for all $i \in [\eT]$, as
$p^*\ge 6\uutil[\eT] \ge 6 \eT\Delta$. Note that $F$ is monotonically increasing. We have
\begin{align*}
1-F^{\ell_{t-1}}\left(\frac{p^*}{3}\right) \ge & 1-F^{\ell_{t-1}}\left(\frac{p^*}{2}-\eT\Delta+\lutil\right) \\
\ge &\int_{\lutil}^{\uutil}\left( 1-F^{\ell_{t-1}}\left(\frac{p^*}{2}-\eT\Delta+u\right) \right) dG_t(u) = h(\eT) \ge 2^{\eT} \cdot h(0) \\
= & 2^{\eT} \cdot \int_{\lutil}^{\uutil}\left( 1-F^{\ell_{t-1}}\left(\frac{p^*}{2}+u\right) \right) dG_t(u) \ge 2^{\eT} \cdot \left(1-F^{\ell_{t-1}}(p^*)\right).
\end{align*}
Let $q_1 = 1-F(p^*)$ and $q_2 = 1-F\left(\frac{p^*}{3}\right)$. Observe that $\frac{q_2}{q_1} \ge \frac{1-F^{\ell_{t-1}}(\frac{p^*}{3})}{1-F^{\ell_{t-1}}(p^*)} \ge 2^{\eT}$.
We consider a single menu page over $k$ items priced at $\frac{p^*}{3}$, then
\begin{align*}
\upricing(k) \ge & (1-(1-q_2)^{k}) \cdot \frac{p^*}{3} \ge (1-(1-2^{\eT}\cdot q_1)^{k}) \cdot \frac{p^*}{3} \\
\ge & (1-(1-q_1)^{2^{\eT} k}) \cdot \frac{p^*}{3} \ge (1-(1-q_1)^{\ell/2}) \cdot \frac{p^*}{3} = \frac{1}{3} \cdot \upricingmech{\frac{\ell}{2}},
\end{align*}
where the last inequality follows from the fact that $\ell/2 \le (\eT+1)k \le 2^{\eT}k$. Thus, $\upricing(\ell) \le 2 \cdot \upricing(\frac{\ell}{2}) \le 6 \cdot \upricing(k)$.

Now, we are left with the case that for all $t\in [\eT]$ there exists a $p_{t-1} \in [\frac{p^*}{3}, \frac{p^*}{2}]$, so that 
\[
\int_{\lutil}^{\uutil}\left( 1-F^{\ell_{t-1}}(p_{t-1}+u) \right) dG_t(u) \ge \frac{1}{2} \cdot \int_{\lutil}^{\uutil}\left( 1-F^{\ell_{t-1}}(p_{t-1}-\Delta+u) \right) dG_t(u).
\]
Let $v$ be the valuation of the buyer's favorite item over $\ell_{t-1}$ items.
The above inequality states that
\begin{equation}
\Prlong[\substack{v\sim F^{\ell_{t-1}}\\ \utilb \sim G_t}]{v-p_{t-1} \ge \utilb(t) \left| v-p_{t-1} \ge \utilb(t)-\Delta\vphantom{\sum_{i}^{j}}\right .} \ge \frac{1}{2}.
\label{eq:cond_prob}
\end{equation}

Then, for each $t\in[\eT]$, we fill the empty slots on $\Menu(t)$ with $\ell_t$ expensive items priced at $p_t$.
Observe that $\sum_{t}\ell_t \ge \frac{\ell}{2}-k$.
We add an extra menu page with $k$ expensive items priced at $p_{\eT+1}=\frac{p^*}{2}$ at the end of $\Menu$.
We denote this collection of menus as $\Menu_B$ and the corresponding mechanism as $\mech_B$. Observe that $\Menu_B$ has $(\eT+1)$ menu pages.
The mechanism $\mech_B$ is a bait mechanism with $\Menu$ items being the bait items.

Now we establish a lower bound on the revenue extracted by $\mech_B$.
Let $\expensive$ be the set of expensive items, i.e., not bait items, in $\Menu_B$. 
We will study the revenue of $\mech_B$ only obtained when the following event occurs
$\eventi[1]\eqdef\left\{\vals:\utilb(t)\in [\lutil, \uutil]~~\forall t \in [\eT]\right\}$.

By Lemma~\ref{lem:baits_2_prices} and Fact~\ref{fa:eps_to_2eps}, we know that 
\begin{equation}
\label{eq:prob_e1}
\prob{\bold{E_1}} = \prob{\forall t\in[\eT], \utilb(t) \in [\lutil, \uutil]} \ge \prod_{t\in[\eT]}(1-2\eps_t) \ge \frac{1}{3},
\end{equation}
since $\prod_{t\in[\eT]} (1-\eps_t) \ge \frac{2}{3}$ by Equation~(\ref{eq:bait_util}).

\begin{claim*}
	The revenue of $\mech_B$ conditioned on $\eventi[1]$ is at least $\frac{1}{12} \upricing(\frac{\ell}{2})$.
\end{claim*}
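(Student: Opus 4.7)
The plan is to mirror the proof of the analogous claim from the $(\Delta,\eta)$-spreading case, using the per-page prices $\{p_{t-1}\}$ and the inequality (\ref{eq:cond_prob}) in place of the spreading property. Throughout I condition on $\eventi[1]$, so the bait utilities $\utilb(t)$ lie in $[\lutil,\uutil]$ and are monotonically increasing with gap at least $\Delta$ between successive pages. Under this conditioning, $\utilb(s)$ has distribution $G_s$ (since the bait items on distinct pages are independent), which is exactly the distribution appearing in (\ref{eq:cond_prob}).

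First I would define the random stage $s \in [\eT+1]$ as the first index at which the expensive items on page $s-1$ interfere with the continuation decision, namely
\[
\max_{i \in \expensive_{s-1}} \InParentheses{\vali - p_{s-1}} \ge \utilb(s) - \Delta.
\]
The same monotonicity argument as in the spreading case shows that if no such $s$ exists then the buyer walks through every page, so the relevant revenue comes from the event that $s$ exists. To lower bound its probability, I would observe that whenever $\max_{i \in \expensive}\vali \ge p^*$, the item realizing this maximum, placed on some page $t^*$, satisfies $\vali - p_{t^*} \ge p^*/2$, while $\utilb(t^*+1) - \Delta \le \uutil[\eT] \le p^*/6$; so the interference threshold is triggered at or before stage $t^*+1$, giving
\[
\Prx{s \text{ exists}\mid\eventi[1]} \ge \Prx{\max_{i \in \expensive}\vali \ge p^*}.
\]

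Second, conditional on $s=t$ and $\eventi[1]$, I combine two essentially independent events. Let $\eventi[2]$ be the event that no expensive item on page $s$ gives positive utility, i.e.\ $\max_{i \in \expensive_s}(\vali - p_s) < 0$. Since $p_s \ge p^*/3$ and we are in the regime $1 - F^k(p^*/3) < \tfrac{1}{2}$, we have $\Prx{\eventi[2]} \ge \tfrac{1}{2}$, and the event depends only on the page-$s$ expensive items, hence is independent of everything else that has been conditioned on. Moreover, (\ref{eq:cond_prob}) applied at $t=s$, together with the independence of the expensive items on page $s-1$ from all other items and bait pages, implies that with further conditional probability at least $\tfrac{1}{2}$ we actually have $\max_{i \in \expensive_{s-1}}(\vali - p_{s-1}) \ge \utilb(s)$. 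Under all three events, $u(s) = \utilb(s) \le u(s-1)$, so the buyer stops at stage $s$ and takes an item of utility $u(s-1) \ge \utilb(s)$; by the monotonicity of $\utilb$ under $\eventi[1]$ no bait item can achieve this utility, so she purchases an expensive item and pays at least $p^*/3$.

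Combining the three probability bounds with the per-event revenue gives
\[
\rev(\mech_B \mid \eventi[1]) \ge \Prx{\max_{i\in\expensive}\vali \ge p^*} \cdot \tfrac{1}{2} \cdot \tfrac{1}{2} \cdot \tfrac{p^*}{3} \ge \tfrac{1}{12}\,\upricingmech{\tfrac{\ell}{2}},
\]
where the last step uses $|\expensive| \ge \ell/2$ (the $\ell/2 - k$ slots filled in across the bait pages plus the $k$-item final page) and the optimality of $p^*$ for $\upricing(\ell/2)$. The main obstacle I anticipate is the independence bookkeeping: the three constituent events depend on disjoint collections of items, but the conditional event from (\ref{eq:cond_prob}) is stated conditionally on the interference threshold and must be cleanly separated from the $s$-existence event. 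Revealing the bait randomness first (which fixes $\eventi[1]$ and the sequence $\{\utilb(t)\}$) and then the expensive items page by page in order $s-1,s,\dots$ should make all the relevant factorizations transparent.
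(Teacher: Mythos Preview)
Your proposal is correct and follows essentially the same approach as the paper's own proof: define the first interference time $s$, use $\eventi[2]$ (independent, probability $\ge\tfrac12$) to force a stop, use \eqref{eq:cond_prob} (again independent, probability $\ge\tfrac12$) to guarantee the expensive item on page $s-1$ beats the best bait, and lower bound $\Prx{s\text{ exists}}$ by $\Prx{\max_{i\in\expensive}\vali\ge p^*}$. The only differences are cosmetic ordering (you bound $\Prx{s\text{ exists}}$ first, the paper does it last) and that you spell out the independence decomposition and the ``why the purchased item is expensive'' step more explicitly than the paper does.
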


\begin{proof}
Let $\expensive_t$ be the expensive items on page page $t$, i.e. $\expensive_t = \Menu_B(t) \cap \expensive$.
We consider the first time $s$ (might not exist) that $\expensive_{s-1}$ interfere with the effect of bait items on page $s$,
i.e., 
\[
\max_{i \in \expensive_{s-1}} (\vali - p_{s-1}) \ge \utilb(s) - \Delta.
\]
First, let us assume that such time $s\in\N$ exists.
Note that, conditioned on $\eventi[1]$, the buyer always continues to the next page if the above event does not happen at stage $s$.
Indeed, by construction $\lutil[s]-\uutil[s-1] \ge \Delta$ and $\utilb(s)-\utilb(s-1) \ge \lutil[s]-\uutil[s-1]$ by the definition of $\eventi[1]$, and 
if $\utilb(s) - \max_{i \in \expensive_{s-1}} (\vali - p_{s-1}) \ge \Delta$ then $\utilb(s)\ge \util(s-1)+\Delta$.

Furthermore, we only consider the case when the buyer does not like any expensive items $\expensive_s$ on page $s\in\N$. This happens with probability at least 
$\frac{1}{2}$, as $1-F^k(p_{s}) \le 1-F^k(\frac{p^*}{3}) < \frac{1}{2}$. Denote this event by $\eventi[2]$.
Consequently, the buyer stops at stage $s$ when $\eventi[2]$ happens.
Recall that $\ell_{s-1} = |\expensive_{s-1}|$ and that $\val \eqdef \max_{i \in \expensive_{s-1}} \vali$ is drawn according to $F^{\ell_{s-1}}$.
Observe that $v$ is independent of $\eventi[1], \eventi[2]$.
By Equation \eqref{eq:cond_prob}, we have
\begin{align*}
	\Prlong[v\sim F^{\ell_{s-1}}]{v-p_{s-1} \ge \utilb(s) \left| v-p_{s-1} \ge \utilb(s)-\Delta\vphantom{\sum_i^j}\right .} \ge \frac{1}{2}.
\end{align*}
That is, conditioned on the buyer stopping at time $s$, i.e., $\val - p_{s-1} \ge \utilb(s)-\Delta$, the probability that she buys an expensive item is at least $\frac{1}{2}$. Thus, the expected revenue is at least $\frac{p_{s-1}}{2} \ge \frac{p^*}{6}$. 

%


To sum up, we have shown that for all $s\in\N$, the expected revenue is at least $\prob{\eventi[2]} \cdot \frac{p^*}{6} \ge \frac{p^*}{12}$.
We are left to lower bound the probability that such $s\in\N$ exists.
We claim that $s\in\N$ when $\max_{i \in \expensive} \vali \ge p^*$. Indeed, we have, $\utilb(s) + \oprice - \Delta \le \uutil[\eT] + \oprice - \Delta \le p^*$ for all $s$.

We conclude that conditioned on $\eventi[1]$, the revenue is at least 
\[
\prob{\max_{i \in \expensive} v_i \ge p^*} \cdot \frac{p^*}{12} = \frac{1}{12} \upricingmech{\frac{\ell}{2}}.\qedhere
\]

\end{proof}

Overall, we have the following revenue guarantee of $\mech_B$,
\[
\rev(\mech_B) \ge \prob{E_1} \cdot \frac{1}{12}\upricingmech{\frac{\ell}{2}} \ge \frac{1}{36} \upricingmech{\frac{\ell}{2}}.
\]

Combining this with Claim~\ref{cl:umenu_half} and~\ref{cl:uutil_T}, we have
\[
\upricing(\ell) \le 2 \cdot \upricingmech{\frac{\ell}{2}} \le 72 \cdot \rev(\mech_B) \le 72 \cdot \max_{\mech \in \mathcal{B}} \rev(\mech). \qedhere
\]
\end{proof}


\section{Open problems}
\label{sec:open}
We conclude with a few remarks. First, in the choice of our model we specifically looked for as simple 
mathematical formulation as possible. Specifically our i.i.d. assumption, although it might
seem restrictive, actually helps to highlight interesting features and structure of the optimal
pricing for the buyer with search costs while keeping the mechanism design problem still interesting 
and nontrivial. We leave as an open question the extension to non identical prior distribution.
A good starting point would be to investigate the monopoly problem in the static regime, where the seller can select only 
up to $k$ out of $m$ items to display to the buyer. For the dynamic setting, it would be interesting to see if the 
decomposition into ``bait'' and ``expensive'' items still holds and, if it holds, which features of the distributions 
matter for such separation.

Second, our model is unavoidably built on a specific assumption of the buyer search behavior. There could be many reasonable
extensions of the model in the latter regard, e.g., there could be some fixed probability of stopping no 
matter what the buyer's utility increment was, or the buyer's cost $\Delta$ and exploration tolerance parameter $k$ may be 
random variables, or the buyer may be becoming more patient as the search successfully progresses.

Third, the approximation guarantees obtained in our work are rather large and not optimized even within the current analysis.
Maybe we could improve the approximation constant in Theorem~\ref{thm:one_time} to a number below $100$ or even $50$, but using the
current technique it still will be a large constant and probably too far from the true value. 
Thus it would be great to see a different approach and techniques with a better approximation guarantees.

Finally, in many settings the seller actually may observe more about buyer's preferences, than what we described in our model. 
E.g., in almost every online shopping scenario the seller can observe the ``cart'' of the buyer, i.e., the current most favorite 
item of the buyer. This observation may in principal change the seller's algorithm. It would be interesting to see how such extra 
information can affect the seller's pricing policy.

\section*{Acknowledgments}
We sincerely thank Nima Haghpanah for many helpful and productive discussions at all stages of this project. We also thank the audience of
the Dynamic Pricing Workshop at University de Chile for many useful comments.

\bibliographystyle{plain}
\bibliography{../bibs,../ref}

\appendix
\section{Connection with Multi-buyer SPM}
\label{app:sec:spm}
There is a close relationship between $\myopicname$ and the well-known {\bf sequential posted pricing} ($\spm$) mechanism.
In sequential posted pricing mechanism there is $1$ item for sale to $n$ i.i.d. buyers. A $\spm$ is characterized by a price vector $\pricev \in \reals^n$.
The buyers come in a sequence, when the $t$-th comes, we offer a take-it-or-leave-it price $\price_t$. The expected revenue of this mechanism is denoted by $\spm(\pricev)$. 
Let $\spm(n)$ be the optimal revenue one can collect by using sequential posted pricing. We use $\uspm$ if we restrict the posted prices to be the same for all buyers.

It is easy to see that any mechanism for $\myopicname$ induces a mechanism for $\spm$, and vice versa.
\begin{claim}
\label{cl:myopic_spm}
$\myopic{n} = \spm(n)$.
\end{claim}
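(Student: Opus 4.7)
The plan is to exhibit a revenue-preserving correspondence between \myopicname\ menus and \spm\ price sequences, using the operation \emph{sort prices in decreasing order}. In both settings the randomness consists of $n$ i.i.d.\ draws $v_1,\ldots,v_n\sim \distone$: in \myopicname\ these are the buyer's values for the $n$ menu items (considered in decreasing order of price), and in \spm\ these are the values of the $n$ arriving buyers (each offered her own posted price). First I would check that for any multiset of prices $M=\{p_1,\ldots,p_n\}$ written in decreasing order $p_{(1)}\ge\cdots\ge p_{(n)}$, both mechanisms produce the same revenue on every realization $(v_1,\ldots,v_n)$: the \spm\ revenue is $p_{(\tau)}$, where $\tau$ is the smallest index with $v_\tau\ge p_{(\tau)}$ (and $0$ if no such index exists); the \myopicname\ revenue $\max\{p_{(t)}:v_t\ge p_{(t)}\}$ coincides with $p_{(\tau)}$ because the $p_{(t)}$'s are decreasing, so the maximum over accepting indices is attained at the \emph{smallest} one. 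Taking expectations over $(v_1,\ldots,v_n)$ yields $\myopic{M}=\spm(p_{(1)},\ldots,p_{(n)})$.

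The second step is to show that the supremum defining $\spm(n)$ is attained on a decreasing price vector, so the two optima coincide. For this I would run the classical adjacent-swap argument. Take any price vector \pricev\ and suppose two adjacent entries satisfy $p_s<p_{s+1}$. Condition on the event $E$ that the item is still unsold just before stage $s$; its probability $q$ depends only on $p_1,\ldots,p_{s-1}$ and not on $p_s$ or $p_{s+1}$. Under $E$, the conditional expected revenue contributed by stages $s$ and $s+1$ is $p_s(1-F(p_s))+p_{s+1}(1-F(p_{s+1}))F(p_s)$ before the swap and $p_{s+1}(1-F(p_{s+1}))+p_s(1-F(p_s))F(p_{s+1})$ after, where I use that the values at both stages share the same CDF $F$ (this is where the i.i.d.\ assumption enters). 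Subtracting yields a nonnegative gain of $q\cdot(1-F(p_s))(1-F(p_{s+1}))(p_{s+1}-p_s)\ge 0$. Iterated adjacent swaps (bubble sort) therefore rearrange any price vector into decreasing order without losing expected revenue.

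Combining the two steps, $\spm(n)=\sup\{\spm(\pricev):\pricev\text{ decreasing of length }n\}$ by the adjacent-swap step, and each decreasing \pricev\ gives $\spm(\pricev)=\myopic{M}$ with $M$ the multiset of entries of \pricev\ by the pointwise identity of the first step; conversely any \myopicname\ menu $M$ of size at most $n$ can be padded with dummy prices that never sell, giving a decreasing \spm\ vector of length $n$ with the same revenue. Hence $\spm(n)=\myopic{n}$. The only mildly technical piece is the adjacent-swap calculation; everything else is a direct unpacking of the definitions, and the i.i.d.\ assumption is used exactly once — when asserting that both stages in the swap share the same CDF.
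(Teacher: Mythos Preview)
Your proof is correct. The forward coupling (a decreasing price vector yields identical \myopicname\ and \spm\ revenue on every realization, since the first accepting index in a decreasing sequence carries the maximum accepting price) is exactly the paper's argument.

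For the reverse direction you take a slightly different route. The paper simply asserts that ``the same argument holds reversely,'' i.e., any \spm\ price vector induces a \myopicname\ menu; implicit here is the pointwise inequality that for the same price multiset and the same realization $(v_1,\dots,v_n)$, the \myopicname\ revenue $\max\{p_i:v_i\ge p_i\}$ is at least the \spm\ revenue $p_\tau$ with $\tau=\min\{t:v_t\ge p_t\}$, because $\tau$ is one of the accepting indices. This already gives $\spm(n)\le\myopic{n}$ with no sorting needed. You instead prove, via the classical adjacent-swap exchange, that the \spm\ optimum is attained on a decreasing vector, and then invoke the pointwise \emph{equality} from the first step. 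Both routes are valid; yours yields the extra structural fact about the \spm\ optimizer at the price of a short computation, while the paper's is a one-line domination.

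One small completeness remark on your swap step: you compare only the revenue contributed at stages $s$ and $s{+}1$. To finish, note that the probability the item remains unsold after both stages is $F(p_s)F(p_{s+1})$ regardless of their order, so the continuation contribution from stages $s{+}2,\dots,n$ is unchanged by the swap; hence the total difference in expected revenue equals the expression you computed.
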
 
\begin{proof}
Let $\pricev$ be the optimal price vector for $\myopic{n}$. We use the same prices in the sequential posted pricing mechanism. For any value profile $\valv \in \reals^n$ items, we map item $j$ in 
the setting with a greedy buyer to the $j$-th buyer's value in the $1$-item-$n$-buyer setting. It is easy to see that the greedy buyer picks item $j$ if and only if the $j$-th buyer wins in the sequential posted pricing mechanism.
The same argument holds reversely, i.e. any sequential posted pricing mechanism also induces a menu for a greedy buyer, from which we conclude the statement.
\end{proof}

Furthermore, the argument also applies if we restrict the posted prices to be a uniform one for both $\myopicname$ and $\spm$.
Observe that with uniform price, the revenue extracted from a greedy buyer is the same as using uniform pricing mechanism. Hence, the optimal revenue of $\uspm$ equals to the optimal revenue of \emph{uniform pricing} mechanism.
\begin{claim}
	\label{cl:upricing_uspm}
	$\upricing(n) = \uspm(n)$.
\end{claim}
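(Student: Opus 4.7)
The plan is to piggyback on the bijection already established in Claim~\ref{cl:myopic_spm}, which showed $\myopic{n} = \spm(n)$ by mapping the $j$-th item in a greedy-buyer menu to the $j$-th buyer in the $\spm$ instance. That correspondence is price-vector-preserving, so restricting both sides to uniform price vectors $\price_1=\cdots=\price_n=p$ yields $\myopicname$-with-uniform-prices equal to $\uspm$.

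The remaining step is to observe that $\myopicname$ restricted to a uniform price $p$ over $n$ items coincides with the uniform pricing mechanism $\upricing(n,p)$. Indeed, when all $n$ items have the same price $p$, the greedy buyer takes the first item with $\vali \ge p$, and the revenue is $p$ exactly when at least one of the $n$ i.i.d. values clears $p$. Since values are i.i.d., this happens with probability $1-F(p)^n$, so the greedy revenue at uniform price $p$ equals $(1-F(p)^n)\cdot p = \upricing(n,p)$. Taking the maximum over $p$ gives $\upricing(n)$ on one side and $\uspm(n)$ on the other, which yields the claim.

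I would first state the bijection of Claim~\ref{cl:myopic_spm} restricted to uniform prices, then verify the equality between greedy-with-uniform-price revenue and $\upricing(n,p)$ by the one-line probability computation above, then conclude by taking the supremum over $p$. I do not anticipate any real obstacle: the content is genuinely just a specialization of the previous claim combined with the trivial observation that the greedy buyer's behavior under a uniform price is order-independent, so sequential revelation and simultaneous revelation generate the same revenue.
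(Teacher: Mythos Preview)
Your proposal is correct and matches the paper's approach exactly: the paper simply remarks that the bijection of Claim~\ref{cl:myopic_spm} specializes to uniform prices and that greedy revenue at a uniform price $p$ coincides with $\upricing(n,p)$. Your write-up is, if anything, more detailed than the paper's one-line justification.
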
 



\end{document}